\newif\ifcode
\newtheorem{fact}[theorem]{Fact}
\renewenvironment{proof}{\noindent{\bf Proof.}}{\hfill$\Box$\FF}
\newcommand{\N}{\mathbb N}
\title{{\bf Timed Quorum Systems for\\ Large-Scale and Dynamic Environments}\thanks{\small Contact author: Vincent Gramoli, ASAP Research Group, INRIA Futurs, 3-4 rue Jacques Monod, 91893 Orsay, France; fax:  +33 1 74 85 42 42.}}
\author{Vincent Gramoli\inst{1}\fnmsep\inst{2}~~
Michel  Raynal\inst{2}}
\institute{Universit\'e de Rennes 1\\
INRIA Research Centre Rennes \\
Campus de Beaulieu,\\
35042 Rennes,  France}
\institute{
INRIA Futurs, \\Parc Club Orsay Universit\'e, 91893 Orsay, France \\
\email{vgramoli@irisa.fr}\\
\and Universit\'e de Rennes 1 and INRIA Research Centre Rennes,\\
Campus de Beaulieu, 35042 Rennes, France\\
\email{raynal@irisa.fr}\\
}
\begin{document}

\maketitle

\begin{abstract}
This paper presents Timed Quorum System (TQS), a new quorum 
system especially suited for large-scale and dynamic systems.
TQS requires that two quorums intersect with high probability if
they are used in the same small period of time.
It proposed an algorithm that implements TQS and that verifies
probabilistic atomicity: a consistency criterion that requires 
each operation to respect atomicity with high probability.
This TQS implementation has quorum of size $O(\sqrt{nD})$ and 
expected access time of $O(\log{\sqrt{nD}})$ message delays, where $n$ 
measures the size of the system and $D$ is a required parameter 
to handle dynamism. \\

\noindent
{\bf Keywords:} Time, Quorums, Churn, Scalability, Probabilistic atomicity. 
\end{abstract}

\section{Introduction}


The need of resources is a main motivation behind distributed systems.
Take peer-to-peer (p2p) systems as an example. A p2p system is a distributed system that 
has no centralized control.
The p2p systems have gained in popularity with the massive utilization of file-sharing 
applications over the Internet, since 2000.  These systems propose a tremendous amount of 
file resources.
More generally, there is an increasing amount of various computing devices 
surrounding us:
IDC predicts that there will be 17 billions of traditional network devices
by 2012.
In such context, it is common knowledge that scalability has become one of 
the most important challenges of today's distributed systems.

The scale-shift of distributed systems modifies the way computational entities communicate.  
Energy dependence, disconnection, malfunctioning, and other environmental 
factors affect the availability of various computational entities independently.
This translates into unregular periods of activity during which an entity
can receive messages or compute tasks.
As a result of this independent and periodic behaviors, these systems are inherently highly dynamic.
Quorum system is a largely adopted solution for communication in message-passing system.
Despite the interest for emulating shared-memory in dynamic systems~\cite{Her87,RAMBO,FRT05,CGGMS05}, 
there is no scalable solution due to the cost of their failure handling mechanism 
or their operation complexity.
This paper proposes a new quorum system, Timed Quorum System (TQS), whose quorums 
have a bounded lifetime and that intersect with high probability during their lifetime.
We propose an implementation of TQS that emulates a probabilistic 
atomic memory, provided that each node is able to approximate the system size.
We show that the resulting quorum size is $O(\sqrt{nD})$. Factor $n$ is the number of nodes and
factor $D$ is required to handle the dynamism of nodes in the system and can
be bounded if operations are sufficiently frequent.
That is, quorum size becomes $O(\sqrt{n})$, which is optimal, as proved in~\cite{MRWW01}, 
for static settings.
Moreover, the expected time for an operation to contact a quorum is 
$O(\log{\sqrt{nD}})$ message delays.

\paragraph{Related Work.}

Dynamic quorum system is an active research topic.
Some dynamic quorums rely on failure detectors where if a failure is detected, 
then the quorum is adapted. This adaption leads to a redefinition of the 
quorums~\cite{Her87,NW98} or to the replacement of the failed nodes in the 
quorums~\cite{NN05,AM05,GAV07}. For example, in~\cite{AM05}, a communication
structure is continuously maintained to ensure that quorum intersects at all time
(with high probability).

Other solutions relies on periodic reconfigurations~\cite{RAMBO,CGGMS05} where 
the quorum systems are subsequently replaced.  These solutions are different from
the previous ones since the newly installed quorums do not need to intersect with 
the previous ones.  
In~\cite{FRT05} a quorum abstraction states requires two 
properties: (i)~intersection and (ii)~progress, in which the notion of time is 
introduced.  First, a quorum of a certain type intersects the quorum of another 
type contacted subsequently.  Second, each node of a quorum remains active 
between the time the quorum starts being probed and the time the quorum stopped 
being probed. 

As far as we know TQS is the first quorum system that expresses guarantees that 
are both timely and probabilistic.  Time and probability relax the 
traditional intersection requirement of quorums.
We present a scalable emulation of a probabilistic atomic memory where each 
operation is atomic with high probability and where expected operation message complexity is 
$O(\sqrt{nD})$ and expected operation time complexity is $O(\log{\sqrt{nD}})$.
If operations are sufficiently frequent then $D$ becomes a constant leading to 
quorum of size $O(\sqrt{n})$.

\section{System Model and Problem Definition}

\subsection{Model}

The computation model is very simple. 
The system consists of $n$ nodes. It is dynamic in the following sense. 
Every time unit, $cn$  nodes leave the system and $cn$ 
nodes enter the system, where $c$ is an upper bound on the percentage of 
nodes that enter/leave the system per time unit and is called the \emph{churn}; this  can be seen as  new 
nodes ``replacing'' leaving nodes. 
A node leaves the system either voluntarily or because it crashes.  
A node that leaves the system does not reenter it later.  (Practically, 
this means that, when a node reenters the system, it 
is considered as a new node; all its previous knowledge of the system 
state is lost.) 
For the sake of simplicity, it is assumed that for any subset $S$ of nodes, the 
portion of replaced nodes is $c|S|$.
As explained below, the model can be made more complex.
The \emph{universe} $U$ denotes all the nodes of the system, plus the ones that
have already leave the system and the ones that have not joined the system yet.

%

\subsection{Problem}
Most of the dynamic models assume that dynamic events are dependent from each other: only a 
limited number of nodes leave and join the system during a bounded period of time.
For instance in~\cite{CGGMS05}, it is assumed that nodes departures are dependent:
quorum replication ensures that all nodes of at least any two quorums remain active between two 
reconfigurations occur. 
However, in a real dynamic system, nodes act independently.
Due to this independence, even with a precise knowledge of the past dynamic events, 
one can not predict the future behavior of a node.
That is, putting this observation into the quorums context, it translates into the impossibility
of predicting deterministically whether quorums intersect.

In contrast, TQS requires that quorums intersect with high probability.
This allows to use a more realistic model in which there is a certain probability 
that nodes leave/join the system at the same time.
That is, the goal here is to measure the probability that quorum intersect while 
time elapses.  Observe that, realistically, the probability that $k$ nodes leave the system 
increases at the time elapses.  As a result, the probability that a 
quorum $Q(t)$ probed at time $t$ and that a quorum $Q(t')$ probed at time $t'$
intersect decreases as the period $|t'-t|$ increases.
In the following we propose an implementation of TQS where 
probability of intersection remains high.

More precisely, each quorum of our TQS implementation
is defined for a given time $t$.
Each quorum $Q(t)$ has a lifetime $\Delta$ that represents a period during which 
the quorum is \emph{reachable}.  Differently to availability defined in~\cite{NW98}, 
reachability does not depend on the number of nodes that are failed in a quorum system
because this number is unpredictable in dynamic systems.
Instead, a $Q(t)$ quorum is reachable if at least one node of quorum $Q(t)$
is reached with high probability: if two quorums are reachable at the same
time, they intersect with high probability.
More generally, let two quorums $Q(t)$ and $Q(t')$ of a TQS be reachable during 
$\Delta$ time (their lifetime is $\Delta$); if $|t-t'|\leq\Delta$ then  $Q(t)$ and 
$Q(t')$ intersect with high probability.

\paragraph{Probabilistic Atomic Object.}
Initially, any object has a default value $v_0$ that is replicated at a set of nodes and $V$ denotes the set of 
all possible values present in the system.  An object is accessed by read or write
operations initiated by some nodes $i$ at time $t\in T$ that returns or modify the object value $v$. ($T$ is the set 
of all possible time instants.)  If a 
node initiates an operation, then it is referred to as a \emph{client}.
All nodes of the system, including nodes of the quorum system, can initiate a read or a write 
operation, i.e., all nodes are potential clients and the multi-reader/multi-writer 
model is used.  In the following we only consider a single object accessed by operations that 
must satisfy probabilistic atomicity.

A probabilistic atomic object aims at emulating a memory
that offers high quality of service despite 
large scale and dynamism.
%
For the sake of tolerating scale-shift and dynamism, we aim at relaxing 
some properties.  However, our goal is to provide each client with a 
distributed shared memory emulation that offers satisfying quality
of service.  
Quality of service must be formally stated by a consistency criterion that 
defines the guarantees the application can expect from the memory emulation.  We 
aim at providing quality of service in terms of accuracy of
read and write operations.  In other words, our goal is to provide the clients with 
a memory that guarantees that each read or write operation will be successfully 
executed with high probability.
We define the probabilistic atomic object as an atomic object 
where operation accuracy is ensured with high probability.

Let us first recall properties 2 and 4 of atomicity from Theorem 13.16 of~\cite{Lyn96} which 
require that any sequence of invocations responses of read and write operations applied to $x$
satisfies a partial ordering $\prec$ such that:

\begin{itemize}
\item $(\pi_1,\pi_2)$-\textit{ordering}: if the response event of operation $\pi_1$ precedes the 
    invocation event of operation $\pi_2$, then it is not possible 
    to have $\pi_2 \prec \pi_1$;
\item $(\pi_1,\pi_2)$-\textit{return}: the value returned by a read operation $\pi_2$ is the value 
    written by the last preceding write operation $\pi_1$ regarding to $\prec$ 
    (in case no such write operation $\pi_1$ exists, this value returned is the 
    default value).
\end{itemize}
The definition of probabilistic atomicity is similar to the definition of atomicity: 
only Properties 2 and 4 are slightly modified, as indicated below.

\begin{definition}[Probabilistic Atomic Object]\label{def:patomicity}
Let $x$ be a read/write probabilistic atomic object.  Let $H$ be a complete sequence of invocations responses of read and write operations applied to object $x$. 
The sequence $H$ satisfies probabilistic atomicity if and only if there is a partial ordering $\prec$ on the successful operations such that the following properties hold:
\begin{enumerate}
\item For any operation $\pi_2$, there are only finitely many operations $\pi_1$, such that $\pi_1\prec \pi_2$.
  \item Let $\pi_1$ be a successful operation. Any operation $\pi_2$ satisfies $(\pi_1,\pi_2)$-\textit{ordering} with high probability.
(If $\pi_2$ does not satisfy it, then $\pi_2$ is considered as unsuccessful.)
  \item if $\pi_1$ is a write operation and  $\pi_2$ is any 
    operation, then either $\pi_2 \prec \pi_1$ or $\pi_1 \prec \pi_2$;
  \item Let $\pi_1$ be a successful operation. Any operation $\pi_2$ satisfies $(\pi_1,\pi_2)$-\textit{return} with high probability.
(If $\pi_2$ does not satisfy it, then $\pi_2$ is considered as unsuccessful.)
\end{enumerate}
\end{definition}

Observe that the partial ordering is defined on successful operations.  That is, 
either an operation $\pi$ fails and this operation is considered as unordered or the operation succeeds and is ordered 
with respect to other successful operations. 

Even though an operation succeeds with high probability, in an infinite execution it is very 
likely that at least one operation fails.  However, our goal is to provide the operation 
requester (client) with high guarantee of success for each of its operation request.

\paragraph{Additional Notations and Definitions.}

This paragraph defines several terms that are used in the algorithm description.
First, recall that a shared object is accessed through read operations, which return
the current value of the object, and write operations, which modify the current value 
of the object.  To clarify the notion of currency when concurrency happens, it is 
important to explain what are the up-to-date values that could be considered as current. 
We refer to the \emph{last value} as the value associated with the largest $\ms{tag}$
among all values whose propagation is complete.
We refer to the \emph{up-to-date values} at time $t$ as all values $v$ that satisfies 
one of the following properties: \textit{(i)}~value $v$ is the last value or
\textit{(ii)}~value $v$ is a value whose propagation is ongoing and whose associated tag
is at least equal or larger to the tag associated with the last value.

\section{Timed Quorum System}

This section defines Timed Quorum Systems (TQS).
Before being created of after its lifetime elapses, a quorum is not 
guaranteed to intersect with any other quorums, however, during 
its lifetime a quorum is considered as available: two quorums that
are available at the same time intersect with high probability.
In dynamic systems nodes may leave at any time, but this probability 
is bounded, thus it is possible to determine the intersection 
probability of two quorums.

\paragraph{Definition of Timed Quorum System (TQS).}

Next, we formally define TQS that are especially suited for dynamic systems.
Recall that the universe $U$ contains the set of all possible nodes, including the one that have not
join the system yet.
First, we restate the definition of a \emph{set system} as a set of subsets of a universe of nodes.

\begin{definition}[Set System]\label{def:setsystem}
  A \emph{set system} ${\cal S}$ over a universe $U$ is a set of subsets of $U$.
\end{definition}

Then, we define the timed access strategy as an access strategy over a set system that may vary over time.
This definition is motivated by the fact that an access strategy defined over a set ${\cal S}$ can evolve.
To compare with the existing probabilistic dynamic quorums, in~\cite{AM05} the authors defined a dynamic quorum system using an evolving
strategy that might replace some nodes of a quorum while its access strategy remains identical 
despite this evolution.  Unlike the dynamic quorum approach, we need a more general framework to 
consider quorums that are different not only because of their structure but also because of how likely 
they can be accessed.
The timely access strategy 
adds a time parameter to the seminal definition access strategy given by Malkhi et al.~\cite{MRWW01},
A timely access strategy is allowed to evolve over time.

\begin{definition}[Timed Access Strategy]
A \emph{timed access strategy} $\omega(t)$ for a set system ${\cal S}$ at time $t\in T$ is a
probability distribution on the elements of ${\cal S}$ at time $t$.  That is, 
$\omega : {\cal S} \times T \rightarrow [0,1]$ satisfies at any time $t\in T$:
$\sum_{s\in{\cal S}}\omega(s,t) = 1$.
\end{definition}

Informally, at two distinct instants $t_1 \in T$ and $t_2 \in T$, an access strategy might be different
for any reason. For instance, consider that some node $i$ is active at time $t_1$ while the same node $i$ is failed at time 
$t_2$, hence it is likely that if $i\in s$, then $\omega(s,t_1) \neq 0$ while 
$\omega(s,t_2) = 0$.  This is due to the fact that a node is reachable only when it is active.

\begin{definition}[$\Delta$-Timed Quorum System]\label{def:timedqs}
  Let ${\cal Q}$ be a set system, let $\omega(t)$ be a timed access strategy for ${\cal Q}$ at time $t$, 
  and let $0 < \epsilon < 1$ be given.
  
  The tuple $\tup{{\cal Q}, \omega(t)}$ is a \emph{$\Delta$-timed quorum system} if for any
quorums $Q(t_1) \in {\cal Q}$ accessed with strategy $\omega(t_1)$ and $Q(t_2) \in {\cal Q}$ 
accessed with strategy $\omega(t_2)$, 
we have:
\begin{eqnarray}
\Delta \geq |t_1 - t_2| &\Rightarrow& \Pr[Q(t_1) \cap Q(t_2) \neq \emptyset] \geq 1 - \epsilon. \notag
\end{eqnarray}
\end{definition}

\section{Timed Quorum System Implementation for Probabilistic Atomic Memory}

In the following, we present a completely structureless memory.
The quorum systems this memory uses does not rely on any structure which 
makes it flexible.  In contrast with using a logical structured overlay (e.g.,~\cite{MKKB01}) for
communication among quorum system nodes, we use an unstructured communication overlay~\cite{GKM03}.
The lack of structure presents several benefits.  First, there is no need
to readapt the structure at each dynamic event. Second, there is no need for 
detecting failure.  
%
Our solution proposes a periodic replication.
To ensure 
the persistence of an object value despite unbounded
leaves, the value must be replicated an unbounded number of times.  
The solution we propose requires periodic operations and an 
approximation of the system size.  Although we do not focus 
on the problem of approximating the system size $n$, we suggest
the use of existing protocols approximating closely the system size
in dynamic systems~\cite{LKM06}.


\paragraph{Replicating during client operations.}

Benefiting from the natural primitive of the distributed
shared memory, values are replicated using operations.
Any operation has at its heart a quorum-probe that replicates value.
On the one hand, it is natural to think of a write operation as an 
operation that replicates a value. On the other hand, in~\cite{AW98} a 
Theorem shows that "read must write", meaning that a read operation 
must replicates the value it returns.
This raises the question: if operations replicate, why does a memory 
need additional replication mechanism?
%
In large-scale systems, it is also reasonable to assume that shared
objects are frequently accessed because of the large number of participants.
%
%

\paragraph{Quorum Probe.}

The algorithm is divided in three distinct parts that represent the state of the algorithm (Lines~\ref{line:state-start}--\ref{line:state-end}), the actions initiated by a client (Lines~\ref{line:active-start}--\ref{line:active-end}), and the actions taken upon reception 
of messages by a node (Lines~\ref{line:passive-start}--\ref{line:passive-end}), respectively.
Each node $i$ has its own copy of the object called its value $\ms{val}_i$
and an associated tag $\ms{tag}_i$. Field $\ms{tag}$ is a couple of a counter 
and a node identifier and represents, at any time, the version number of 
its corresponding value $\ms{val}$.  We assume that, initially, there are $q$ nodes that own the default
value of the object, the other nodes have their values $\ms{val}$ set to $\bot$ and 
all their $\ms{tag}$s are set to $\tup{0,0}$.

\begin{algorithm}[!ht]
\caption{Disseminating Memory at node $i$}\label{alg:gossip}
  \begin{algorithmic}[1]{\size
  
   	\Part{State of node $i$}{																										\label{line:state-start}
   		\State $q = \frac{\beta\sqrt{n}}{(1-c)^\frac{\Delta}{2}}$, the quorum size						\label{line:qsize}
	    \State $\ell, k\in \N$ the disseminating parameters taken such that $\frac{k^{l+1}-1}{k-1} \geq q$\label{line:dissemiate-size}
	    \State $\ms{val} \in V$, the value of the object, initially $\bot$
	    \State $\ms{tag}$, a couple of fields:
	    \State \T$\ms{counter}\in \N$, initially $0$
	    \State \T$\ms{id}\in I$, an identifier initially $i$
	    \State $\ms{marked}$, an array of boolean initially $\lit{false}$ at all indices 
	    \State $\ms{sent-to-nbrs1}$, $\ms{sent-to-nbrs2}$ two sets of node identifiers, initially $\emptyset$ 
	    \State $\ms{rcvd-from-qnodes}$, an infinite array of identifier sets, initially $\emptyset$ at all indices 
	    \State $\ms{sn} \in \N$, the sequence number of the current phase, initially 0
	  }\EndPart																																		\label{line:state-end}
  
%

		\Statex
		
		\begin{multicols}{2}
		
		\Part{Read$_i$}{																											\label{line:active-start}
			\State $\tup{\ms{val},\ms{tag}} \gets$ {\bf Consult}()							
			\State {\bf Propagate}($\tup{\ms{val},\ms{tag}}$)										\label{line:dissem-tag-r}
		}\EndPart
		
		\newpage

		\Part{Write($\ms{v}$)$_i$}{	 
			\State $\tup{*,\ms{tag}} \gets ${\bf Consult}()
			\State $\ms{tag}.\ms{counter} \gets \ms{tag}.\ms{counter}+1$
			\State $\ms{tag}.\ms{id} \gets i$
			\State $\ms{val} \gets v$
			\State {\bf Propagate}($\tup{\ms{val},\ms{tag}}$)										\label{line:dissem-tag-w}
		}\EndPart
		
		\end{multicols}
		
		\Part{Consult$_i$}{	 
	    \State $\ms{ttl} \gets \ell$
	    \State $\ms{sn} \gets \ms{sn}+1$
	    \While{$(|\ms{sent-to-nbrs1}| < k)$}																										\label{line:snd-nbrs-start1}
	      \State $\act{send} \tup{\lit{CONS}, \ms{val}, \ms{tag}, \ms{ttl}, i, \ms{sn}}$ to $(k- |\ms{sent-from-nbrs1}|)$ 
	      neighbors $\neq j$
	      \State $\ms{sent-to-nbrs1} \gets\ms{sent-to-nbrs1}\cup \{j\}$
      \EndWhile  																																										\label{line:snd-nbrs-end1}
      \State $\ms{sent-to-nbrs1} \gets \emptyset$
	    \WUntil{$|\ms{rcvd-from-qnodes}[\ms{sn}]| \geq q$}\EndWUntil																\label{line:snd-quorum-end1}
	    \State {\bf return}~$(\tup{\ms{val}, \ms{tag}})$
		}\EndPart
		

    
       }
    \algstore{disseminate}
   \end{algorithmic}
\end{algorithm}

Each read and write operation is executed by client $i$ in two subsequent phases, each 
disseminating a message to $q = O(\sqrt{nD})$ nodes, where $D=1/(1-c)^{\Delta}$ is required
to handle churn $c$ during period $\Delta$.\footnote{In~\cite{MRWW01}, it has 
been showed that $q = O(\sqrt{n})$ is sufficient in static systems.}
The two subsequent phases are called the \emph{consultation phase} and the \emph{propagation phase}.  The consultation 
phase aims at consulting the up-to-date value of the object that is present
in the system. (This value is identifiable since it associates the
largest tag present in the system.) 
More precisely, client $i$ disseminates a consultation message to $q$ nodes so that
each receiver $j$ responds with a message containing value $\ms{val}_j$ 
and tag $\ms{tag}_j$ so that client $i$ can update $\ms{val}_i$ and $\ms{tag}_i$.  
In fact, $i$ updates $\ms{val}_i$ and $\ms{tag}_i$ if and only if 
the $\ms{tag}_i$ has either a smaller counter than $\ms{tag}_j$ or it has an equal
counter but a smaller identifiers $i<j$ (node identifiers are always distinct); in this case we say
$\ms{tag}_i < \ms{tag}_j$ for short (cf. Lines~\ref{line:comparison1} and \ref{line:comparison2}).
Ideally, at the end of the consultation phase
client $i$ has set its value $\ms{val}_i$ to the up-to-date value.
Read and write operations differ from the value and tag that are propagated by the client 
$i$.  Specifically, in case of a read, client $i$ propagates the value and tag pair freshly consulted, 
while in the case of write, client $i$ propagates the new value to write
with a strictly larger tag than the largest tag that $i$ has consulted so far.
The propagation phase propagates the corresponding value and tag by dissemination among nodes.

\begin{algorithm}[!ht]
  \begin{algorithmic}[1]{\size
  \algrestore{disseminate}
  	
  	\Part{Propagate(\tup{\ms{val},\ms{t}})$_i$}{	 
	    \State $\ms{ttl} \gets \ell$
	    \State $\ms{sn} \gets \ms{sn}+1$
	    \While{$(|\ms{sent-to-nbrs1}| < k)$}																							\label{line:snd-nbrs-start2}
	      \State $\act{send} \tup{\lit{PROP}, \ms{val}, \ms{tag}, \ms{ttl}, i, \ms{sn}}$ to $(k- |\ms{sent-to-nbrs1}|)$ 
	      neighbors $\neq j$
	      \State $\ms{sent-to-nbrs1} \gets\ms{sent-to-nbrs1}\cup \{j\}$
      \EndWhile 																																												  \label{line:snd-nbrs-end2}
      \State $\ms{sent-to-nbrs1} \gets \emptyset$
	    \WUntil{$|\ms{rcvd-from-qnodes}[\ms{sn}]| \geq q$}\EndWUntil																			\label{line:snd-quorum-end2}
		}\EndPart																																															\label{line:active-end}

  	\Statex
  
  	\Part{Participate$_i$ (Activated upon reception of a message)}{													\label{line:passive-start}
	    \State $\act{recv} \tup{\ms{type}, \ms{v}, \ms{t}, \ms{ttl}, \ms{client-id}, \ms{sn}}$ from $j$
		  
		  \If{$(\ms{marked}[\ms{sn}])$}																																			\label{line:forward-start}
		      \State $\act{send} \tup{\ms{type}, \ms{v}, \ms{t}, \ms{ttl}, \ms{client-id}, \ms{sn}}$ to 
		      a neighbor $\neq j$
	    \Else 																					\label{line:forward-end-prob}
	    	\State $\ms{marked}[\ms{sn}] \gets \lit{true}$																							\label{line:participate-start}
			  \SmallIf{$(\ms{type} = \lit{CONS})$} $\tup{v,t} \gets \tup{\ms{val},\ms{tag}}$ \EndSmallIf
				\SmallIf{$(\ms{type} = \lit{PROP})$} $\tup{\ms{val},\ms{tag}} \gets \tup{v,t}$ \EndSmallIf	\label{line:comparison1}
			  \If{$(\ms{type} = \lit{RESP})$}
			  	\SmallIf{$\ms{tag} < t$} $\tup{\ms{val},\ms{tag}} \gets \tup{v,t}$ \EndSmallIf						\label{line:comparison2}
		    	\State $\ms{rcvd-from-qnodes}[\ms{sn}] \gets\ms{rcvd-from-qnodes}[\ms{sn}] \cup \{j\}$
			  \EndIf
			  
			  \State $\ms{ttl} \gets \ms{ttl} - 1$
		    \If{$(\ms{ttl} > 0)$} 
		    
		    	\While{$(|\ms{sent-to-nbrs2}| < k)$}	
			      \State $\act{send} \tup{\ms{type}, \ms{v}, \ms{t}, \ms{ttl}, \ms{client-id}, \ms{sn}}$ to $(k- |\ms{sent-to-nbrs2}|)$ 
			      neighbors $\neq j$
			      \State $\ms{sent-to-nbrs2} \gets\ms{sent-to-nbrs2}\cup \{j\}$
		      \EndWhile
		      \State $\ms{sent-to-nbrs2} \gets\emptyset$
		    \EndIf
				\State send $\tup{\lit{RESP}, \ms{val}, \ms{tag}, \ms{ttl}, \bot, \ms{sn}}$ to $\ms{client-id}$\label{line:participate-end}
		  \EndIf																																					\label{line:passive-end}
%
%
%
		}\EndPart																																								
  	
   }
   \end{algorithmic}
\end{algorithm}

Next, we focus on the dissemination procedure that is at the heart of the 
consultation and propagation phases. 
There are two parameters, $\ell, k$, that define the way all consultation or propagation messages
are disseminated.  Parameter $\ell$ indicates the depth of the dissemination, it is used to set a 
time-to-live field $\ms{ttl}$ that is decremented at each intermediary node that participates in 
the dissemination; if $\ms{ttl}=0$, then dissemination is complete.
Parameter $k$ represents the number of neighbors that are contacted by each intermediary participating 
node.  Together, parameters $\ell$ and $k$ define the number of nodes that are contacted during a 
dissemination. This number is $\frac{k^{\ell+1}-1}{k-1}$ (Line~\ref{line:dissemiate-size}) and represents 
the number of nodes in a balanced 
tree of depth $\ell$ and degree $k+1$: each node having at most $k$ children.  
(This value is provable by recurrence on the depth $\ell$ of the tree.)  
Observe that $\ell$ and $k$ are chosen such that the number of nodes that are contacted 
during a dissemination be larger than $q$ as written Line~\ref{line:dissemiate-size}.

There are three kind of messages denoted by message $\ms{type}$: $\lit{CONS}$, $\lit{PROP}$,
$\lit{RESP}$ indicating if the message is a consultation message, a propagation message, 
or a response to any of the two other messages.
When a new phase starts at client $i$, a time-to-live field $\ms{ttl}$ is set to $\ell$ 
and a sequence number $\ms{sn}$ is incremented.
This number is used in message exchanges to indicate whether a message corresponds to 
the right phase.
Then the phase proceeds in sending continuously messages to $k$ neighbors waiting for their 
answer (Lines~\ref{line:snd-nbrs-start1}--\ref{line:snd-nbrs-end1} and 
Lines~\ref{line:snd-nbrs-start2}--\ref{line:snd-nbrs-end2}).
When the $k$ neighbors answer, client $i$ knows that the dissemination is ongoing.
Then client $i$ receives all messages until a large enough number $q$ of nodes 
have responded in this phase, i.e., with the right sequence number (Lines~\ref{line:snd-quorum-end1}, \ref{line:snd-quorum-end2}).  
If so, then the phase is complete.

Observe that during the dissemination, messages are simply marked (if not so), responded (to client $i$), and reforwarded 
to other neighbors (until $\ms{ttl}$ is null).  Messages are marked by the node $i$ that participates into 
a dissemination for preventing node $i$ from participating multiple times in the same dissemination (Line:~\ref{line:forward-start}). 
As a result, if node $i$ is asked several times to participate, it first participates (Lines~\ref{line:participate-start}--\ref{line:participate-end}) and then it asks another node to participate 
(Lines~\ref{line:forward-start}--\ref{line:forward-end-prob}).  
More precisely, if $\ms{marked}[\ms{sn}]$ is true, then node $i$ re-forwards messages of sequence number $\ms{sn}$ 
without decrementing the $\ms{ttl}$. 
Observe that phase termination and dissemination termination depends on the number of participants rather than the 
number of responses:
it is important that enough participants participate in each dissemination for the phase to eventually 
end.

\paragraph{Contacting Participants Randomly.}
In order to contact the participants randomly, we implemented a membership
protocol~\cite{GKM03}.  This protocol is based on Cyclon~\cite{VGS05}, thus, it 
is lightweight and fault-tolerant. 
Each node has a set of $m$ neighbors
called its view ${\cal N}_i$, it periodically updates its view and recomputes 
its set of neighbors.  
Our underlying membership algorithm provides each node with
a set of $m \geq k+1$ neighbors, so that phases of Algorithm~\ref{alg:gossip} 
disseminate through a tree of degree $k+1$.
%
This algorithm shuffles the view at each cycle
of its execution so that it provides randomness in the choice of neighbors.
Moreover, it has been shown by simulation that the communication graph obtained
with Cyclon is similar to a random graph where neighbors are picked uniformly 
among nodes~\cite{I05}.
Finally, for a different purpose we already have simulated this variant of Cyclon
in~\cite{FGJKR07}: the results obtained was really similar to the one obtained with artificial 
uniformity.


For the sake of uniformity, the membership procedure
is similar to the Cyclon algorithm: 
each node $i$ maintains a view ${\mathcal N_i}$ containing one
entry per neighbor.  
The entry of a neighbor $j$ corresponds to a tuple containing
the neighbor identifier and its age.
Node $i$ copies its view, selects the
oldest neighbor $j$ of its view, removes the entry $e_{j}$ of $j$ from the copy
of its
view, and finally sends the resulting copy to $j$.  When $j$ receives the
view, $j$ sends its own view back to $i$ discarding possible pointers to 
$i$, and $i$ and $j$ update their view with the one they 
receive by firstly keeping the entries they received.  The age of neighbor $j$ entry 
denotes the time that elapsed since the last message from $j$ has been received; this is
used to remove failed neighbor from the list. This variant of Cyclon
exchanges all entries of the view at each step and uses two additional parameters.

\section{Correctness and Performance Results}

This Section gives the result of our algorithm. 
We assume that, initially, at least $q$ nodes own the default value
of the object. Assume also that at least one propagation phase from a successful operation starts 
every $\Delta$ time units and let the time of any phase be bounded by $\delta$ time units.
Next, we assume that 
our underlying communication protocol
provides each node with a view that represents a set of neighbors uniformly drawn at random among the 
set of all active nodes.  Recall that Cyclon shuffles node views and provides communication graph 
similar to a random graph~\cite{I05}.

The first Theorem shows that the proposed solution implements a TQS. 
The second Theorem shows that our solution satisfies probabilistic atomicity.
By lack of space, the proofs are given in the Appendix.

\begin{theorem}
Algorithm~\ref{alg:gossip} implements a $\Delta$-Timed Quorum System, where $\Delta$ is the maximum time between 
two subsequent propagation starts.
\end{theorem}

\begin{theorem}
Algorithm~\ref{alg:gossip} implements a probabilistic atomic object. 
\end{theorem}

Next Lemmas show the performance of our solution: 
the first Lemma gives the message complexity of our solution while the second Lemma gives
the time complexity of our solution.
Observe first that operations complete
provided that sent messages are reliably delivered.  Building onto this assumption, 
an operation complete after contacting $O(\sqrt{nD})$ nodes.  The following Lemma
shows this result.

\begin{lemma}
If messages are not lost, an operation complete after having contacted $O(\sqrt{nD})$ nodes.
\end{lemma}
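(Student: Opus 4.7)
The plan is to bound, for each of the two phases of a read or write, the number of nodes that actually receive a message, and show that this matches the quorum target $q$ up to a constant. Recall from Line~\ref{line:qsize} that $q = \beta\sqrt{n}/(1-c)^{\Delta/2}$, and from the definition $D = 1/(1-c)^{\Delta}$ we get $(1-c)^{-\Delta/2} = \sqrt{D}$, so $q = \Theta(\sqrt{nD})$. The goal is therefore to show that each operation contacts $\Theta(q)$ nodes.

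First I would break the operation into its two constituent phases. Each of Read and Write calls Consult once and Propagate once (see the code blocks \textbf{Read}$_i$ and \textbf{Write}$(v)_i$). It is therefore enough to prove the bound for a single phase and multiply by two. Within one phase, the client initiates a dissemination by sending to $k$ neighbors; each recipient that has not yet participated at this $\ms{sn}$ decrements $\ms{ttl}$, sends a $\lit{RESP}$ to the client, and forwards to $k$ fresh neighbors, stopping when $\ms{ttl}$ reaches $0$. Under the assumption that no message is lost, the set of participants is exactly the set of tree nodes reached before $\ms{ttl}$ expires, which is bounded by the size of a complete $k$-ary tree of depth $\ell$, namely $\frac{k^{\ell+1}-1}{k-1}$.

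Next I would use the choice of parameters on Line~\ref{line:dissemiate-size}, $\frac{k^{\ell+1}-1}{k-1} \geq q$. Taking $\ell, k$ as the smallest values satisfying this bound keeps the tree size within a constant factor of $q$ (since decrementing either parameter would give strictly fewer than $q$ participants, and adding one more level multiplies the size by at most $k+1$, a constant). Consequently the dissemination reaches $\Theta(q)$ distinct participants, and since messages are reliably delivered all of them send a $\lit{RESP}$ back to the client. The wait-until condition $|\ms{rcvd-from-qnodes}[\ms{sn}]|\geq q$ on Lines~\ref{line:snd-quorum-end1} and~\ref{line:snd-quorum-end2} is therefore eventually satisfied, so the phase terminates after contacting $\Theta(q)$ nodes.

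Finally, summing the two phases gives a total of $\Theta(q) = O(\sqrt{nD})$ contacted nodes per operation, which is the claimed bound. The main subtlety is the second step: I must make sure that the argument bounding the number of participants by the tree size is tight up to a constant, rather than just an upper bound. This is handled by the minimality of $(\ell, k)$ in Line~\ref{line:dissemiate-size}; the "duplicate participation" forwarding branch at Lines~\ref{line:forward-start}--\ref{line:forward-end-prob} does not inflate the count either, because already-marked nodes re-forward without being re-counted and without decrementing $\ms{ttl}$, so the set of genuinely contacted participants still corresponds to a subtree of size at most $\frac{k^{\ell+1}-1}{k-1} = O(q)$.
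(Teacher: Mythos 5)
Your proof is correct and follows essentially the same route as the paper's, which simply observes that each of the two dissemination phases is tied to the quorum size $q=O(\sqrt{nD})$ from Line~\ref{line:qsize} and multiplies by two. The only addition on your side is the explicit remark that $(\ell,k)$ must be chosen minimally so the tree size $\frac{k^{\ell+1}-1}{k-1}$ stays within a constant factor of $q$ — a caveat the paper leaves implicit by instead keying termination to the count of distinct responders reaching $q$.
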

\begin{proof}
This is straightforward from the fact that termination of the dissemination process
is conditioned to the number of distinct nodes contacted: $q = O(\sqrt{nD})$, with $D = (1-c)^{-\Delta}$ 
(cf. Line~\ref{line:qsize}). 
Since there are two disseminating phases in each operation, an operation is executed after 
contacting $O(\sqrt{nD})$ nodes. 
\end{proof}

Next Lemma indicates that an operation terminates in $O(\log{\sqrt{nD}})$ message delays, in expectation.

\begin{lemma}
If messages are not lost, the expected time of an operation is $O(\log{\sqrt{nD}})$ message delays.
\end{lemma}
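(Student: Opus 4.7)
The plan is to observe that each operation consists of two successive phases (a \emph{Consult} followed by a \emph{Propagate}), and that each phase terminates as soon as $q = O(\sqrt{nD})$ distinct nodes have responded. Since the two phases are structurally identical, it suffices to bound the expected number of message delays until one phase terminates, and then multiply by two to conclude.

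First, I would recast the dissemination as a breadth-first construction of a tree rooted at the client. At level $0$ sits the client, which sends to $k$ neighbors; each participating node, upon first receipt of a message carrying its current $\ms{sn}$, decrements $\ms{ttl}$ and forwards to $k$ further neighbors (Lines in the \textbf{Participate} block). Thus one ``layer'' of the tree is completed in exactly one message delay. By the construction of $\ell$ on Line~\ref{line:dissemiate-size}, a fully grown tree of depth $\ell$ and branching $k$ contains at least $q$ participants, so that an ideal (collision-free) phase terminates in at most $\ell$ message delays after the client's initial send and before the returning \lit{RESP} messages. Since $\ell$ satisfies $k^{\ell+1} \geq q(k-1)+1$, we obtain $\ell = O(\log_k q) = O(\log \sqrt{nD})$ for constant $k$.

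Next I would handle the fact that in a real random neighbor graph some nodes may be reached twice within the same phase: the second arrival finds $\ms{marked}[\ms{sn}] = \lit{true}$ and is merely re-forwarded to another neighbor, costing an extra message delay without deepening the tree of \emph{participants}. Using the assumption that the underlying membership layer (Cyclon) provides neighbors drawn nearly uniformly at random from the active nodes, the probability that a fresh forward targets an already-marked node is at most (\#participants so far)$/n \leq q/n = O(\sqrt{D/n})$, which is $o(1)$ under the working assumption $D = o(n)$. Thus the expected branching factor at each layer is at least $k(1-o(1)) \geq 2$, and a standard branching-process argument shows that the expected number of layers needed to accumulate $q$ participants is still $O(\log q) = O(\log \sqrt{nD})$. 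The expected number of extra re-forwarding hops across the whole phase contributes only a constant factor, which is absorbed in the $O(\cdot)$.

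Finally, summing the two phases of an operation gives an expected duration of $2\cdot O(\log \sqrt{nD}) = O(\log \sqrt{nD})$ message delays. I expect the main technical obstacle to be the formal bookkeeping of the expected-branching argument in step two: one must argue that, \emph{throughout} the phase and not just initially, collisions do not slow the tree's growth by more than a constant factor, which requires either a coupling with a pure branching process or a direct concentration argument on the number of fresh nodes reached at each layer.
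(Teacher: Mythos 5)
Your proposal is correct and follows essentially the same route as the paper: both bound the phase duration by the depth $\ell=O(\log_k q)$ of the dissemination tree and argue that collisions with already-contacted nodes are negligible in expectation (the paper states this as ``the expected number of draws with replacement needed to obtain $q$ distinct nodes is $O(q)$,'' which is your $q/n=o(1)$ branching-factor observation in different clothing). Your version is merely more explicit about the bookkeeping; no substantive difference.
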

\begin{proof}
The proof relies on the fact that $q'$ nodes are contacted uniformly at
random with replacement. In expectation, the number $q'$ that must be contacted to obtain $q$ 
distinct nodes is $q' = q = O(\sqrt{nD})$. Since nodes are contacted in parallel along a tree
of depth $\ell$ and degree $k+1$, the time required to contact all the nodes on the tree 
is $\ell = O(\log_k{q})$.  That is, it is done in $\ell = O(\log_k{\sqrt{nD}})$ 
message delays.
\end{proof}

%

\section{Conclusion}

This paper addressed the problem of emulating a distributed shared memory that 
tolerates scalability and dynamism while being efficient.
%
TQS ensures probabilistic intersection of quorums in a timely fashion.
%
Interestingly, we showed that some TQS implementation verifies a consistency 
criterion weaker but similar to atomicity: probabilistic atomicity.  Hence, 
any operation provided by some TQS satisfies the ordering required
for atomicity with high probability.  
The given implementation of TQS verifies probabilistic atomicity, provides 
lightweight ($O(\sqrt{nD})$ messages) and fast ($O(\log{\sqrt{nD}})$ message delays) 
operations, and does not require reconfiguration mechanism since
periodic replication is piggybacked into operations.

Since we started tackling the problem that node can fail independently,
we are now able to implement probabilistic memory into more realistic models.
Previous solutions required that a very few amount of nodes could fail
at the same time.  More realistically, a model should allow node to act independently
while requiring that failures occurring at the same time are unlikely.
An interesting question is: what probabilistic consistency can TQS achieve
in such a realistic model? \\

%

\noindent {\bf Acknowledgments.}
We are grateful to Anne-Marie Kermarrec and Achour Most\'efaoui for
fruitful discussions about gossip-based algorithms and dynamic systems.

\bibliographystyle{splncs}
\bibliography{thesis}

\appendix

\section{Correctness Proof}\label{app:proof}

Here, we show that Algorithm~\ref{alg:gossip} implements 
a timed quorum system and that it emulates the probabilistic atomic 
object abstraction defined in Definition~\ref{def:patomicity}.
The key points of this proof is to show that quorums are sufficiently 
re-activated by new operations to face dynamism and that subsequent quorums
intersect with very high probability to achieve probabilistic atomicity.

\paragraph{Assumptions and notations.}
First, we only consider executions starting with at least $q$ nodes that own the default value
of the object.  In these executions, at least one propagation phase from a successful operation starts 
every $\Delta$ time units and let the time of any phase be bounded by $\delta$ time units.
We assume that during a propagation that propagates a value $v$ to $q$ nodes and that
executes between time $t$ and $t+\delta$, 
there is at least one instant $t'$ where the $q$ nodes own value $v$ simultaneously.
This instant, $t'$, can occur arbitrarily between time $t$ and
$t+\delta$.  Even if this assumption may not seem realistic since
propagation occurs in parallel of churn (i.e., at the time the propagation contacts the 
$q^{th}$ node the first contacted node may have left the system), our motivations for this assumption
comes from the sake of clarity of the proof and we claim that the absence of 
this assumption leads to the same results. 

Second, we assume that 
our underlying communication protocol
provides each node with a view that represents a set of neighbors uniformly drawn at random among the 
set of all active nodes.  
This assumption is reasonable since, as already mentioned, the underlying algorithm is 
based on Cyclon that shuffles node views and provides communication graph similar to a 
random graph~\cite{I05}.

Next, we show that Algorithm~\ref{alg:gossip} implements a probabilistic object.
Observe that the liveness part of this proof relies simply on the activity of neighbors,
and the fact that messages are eventually received.  More precisely, by examination of 
the code of Algorithm~\ref{alg:gossip}, 
messages are gossiped among neighbors while neighbors are uniformly chosen.  It is clear that operation
termination depends on eventual message delivery.  As a result, only the safety part of 
the proof follows. 
In the following, $\ms{val}(\phi)$ (resp. $\ms{tag}(\phi)$) denote,
the value (resp. tag) consulted/propagated by phase $\phi$.

\paragraph{Correctness proof.}
First, we restate a Lemma appeared in~\cite{GKMRS06b} that computes the ratio of nodes that 
leave the system as time elapses, given a churn of $c$.
The result is the ratio of nodes that leave and join, and helps
computing the probability that up-to-date values remain reachable despite dynamism.

\begin{lemma}\label{lem:churn}
The ratio of initial nodes that have been replaced after $\tau$ time units
is at most $C=1-(1-c)^\tau$.
\end{lemma}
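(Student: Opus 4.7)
The plan is to proceed by induction on $\tau$, exploiting the model assumption that for every subset $S$ of nodes the portion replaced in a single time unit is at most $c|S|$. The quantity of interest is the fraction of the $n$ nodes present at time $0$ that are still in the system at time $\tau$; the replaced ratio is then its complement.

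The base case $\tau = 0$ is immediate: no replacements have occurred, and $1 - (1-c)^0 = 0$. For the inductive step, suppose after $\tau$ time units the set $S_\tau$ of surviving initial nodes satisfies $|S_\tau| \geq (1-c)^\tau \, n$. During the next time unit, by the uniform churn assumption applied to $S_\tau$, at most $c|S_\tau|$ of them are replaced, so $|S_{\tau+1}| \geq (1-c)|S_\tau| \geq (1-c)^{\tau+1} n$. Taking complements yields that the replaced ratio after $\tau+1$ time units is at most $1 - (1-c)^{\tau+1}$, closing the induction.

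The only delicate point is that the bound $c|S|$ must be applied specifically to the surviving initial cohort $S_\tau$, not to an arbitrary subset of the current population; this is exactly what the model grants in the sentence \emph{``for any subset $S$ of nodes, the portion of replaced nodes is $c|S|$''}, so the step is justified. I would state explicitly that this is where the independence-of-subset assumption is used, since otherwise one could only conclude a bound on the \emph{total} number of replacements per time unit, which would not iterate multiplicatively and would give the much weaker linear bound $\tau c$ instead of $1-(1-c)^\tau$.

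There is no real obstacle beyond flagging that distinction: the result is a one-line consequence of an elementary induction, and the content of the lemma lies in the modeling assumption rather than in the combinatorics. I would conclude by noting (if useful later) that $1-(1-c)^\tau \leq \tau c$, so the geometric bound refines the naive union bound, which is the form that will plug into the subsequent quorum-intersection arguments.
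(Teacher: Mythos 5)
Your induction is correct, and the delicate point you flag --- that the per-unit churn bound must be applied to the surviving initial cohort $S_\tau$ rather than to the whole population, which is exactly what the ``for any subset $S$'' modeling assumption licenses --- is precisely the content of the lemma. Note that the paper itself gives no proof here: it simply defers to the reference \cite{GKMRS06b}, and your two-line induction is the standard argument one would expect to find there, so there is nothing to reconcile. (Minor remark: the model states the replaced portion is \emph{exactly} $c|S|$, which makes the ratio exactly $1-(1-c)^\tau$; your ``at most'' reading is weaker but still yields the stated upper bound.)
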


For the proof of the above Lemma~\ref{lem:churn}, please refer to~\cite{GKMRS06b}.
%
The following Lemma gives a lower bound on the number of nodes that own the up-to-date
value at any time in the system. (Recall that an up-to-date value is either 
the value with the largest tag and whose propagation is complete, or 
any value with a larger tag, but whose propagation is ongoing.) 

\begin{lemma}\label{lem:withprop}
At any time $t$ in the system, the number of nodes that own an up-to-date value is at least 
$q(1-c)^{\Delta}$, where
$\Delta$ is the maximum time between two subsequent propagation starts, $q$ is the quorum size, 
and $c$ is the churn of the system.
\end{lemma}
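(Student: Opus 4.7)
The plan is to identify, for any query time $t$, a not-too-distant past instant $t_0$ at which the value of some propagation $\phi$ was simultaneously held by $q$ nodes, and then to use Lemma~\ref{lem:churn} to show that sufficiently many of those $q$ nodes survive the churn between $t_0$ and $t$ while still holding an up-to-date value at $t$.

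First, I would fix an arbitrary time $t$ and use the frequency hypothesis (a successful propagation starts at least every $\Delta$ time units, with the initial configuration of $q$ default-value owners playing the role of a virtual ``$t_0 = 0$'' event for early times) to pick $\phi$ as the most recent propagation whose synchronization instant $t_0$ satisfies $t_0 \leq t$. By the choice of $\phi$ the gap is $\tau := t - t_0 \leq \Delta$. At $t_0$ the explicit synchronization assumption in the appendix preamble supplies $q$ nodes that simultaneously own the pair $(\ms{val}(\phi), \ms{tag}(\phi))$. I then apply Lemma~\ref{lem:churn} to this specific set of $q$ nodes over the interval $[t_0, t]$: the fraction of these $q$ nodes replaced by churn is at most $1 - (1-c)^{\tau} \leq 1 - (1-c)^{\Delta}$, so at least $q(1-c)^{\Delta}$ of them are still active at time $t$.

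The main obstacle is the second part: arguing that every surviving node still owns an up-to-date value at $t$, rather than some stale pair that was later overwritten. The crucial monotonicity ingredient is that tags are totally ordered lexicographically by $(\ms{counter}, \ms{id})$ and every write derives its new tag by incrementing the largest tag returned by its preceding consultation, so the sequence of tags propagated by successful propagations is strictly increasing in time. Because $\phi$ was selected as the latest synchronized propagation with $t_0 \leq t$, any $\lit{PROP}$ message that overwrites the $(\ms{val}(\phi), \ms{tag}(\phi))$ of a surviving node during $[t_0, t]$ must come from a propagation issued strictly after $\phi$, and hence carries a tag strictly greater than $\ms{tag}(\phi)$. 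Consequently, at time $t$ each surviving node holds a pair whose tag is at least $\ms{tag}(\phi)$.

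To finish, I would match this lower bound against the definition of an up-to-date value. Either $\phi$ is itself the most recent completed propagation at $t$, in which case $\ms{tag}(\phi)$ equals the last-value tag and the surviving nodes hold the last value; or some later propagation has already completed, but then any node still holding $\ms{val}(\phi)$ must in fact have been overwritten (by monotonicity) by a value with tag no smaller than that completed propagation's tag, and is therefore also up-to-date; or a strictly later propagation is still ongoing, and the pair stored by the surviving node has tag $\geq \ms{tag}(\phi)$ and, by the definition of up-to-date, qualifies. In every case, at least $q(1-c)^{\Delta}$ nodes hold an up-to-date value at time $t$, which is what the lemma claims.
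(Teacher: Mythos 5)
Your proof follows essentially the same route as the paper's own: identify the most recent propagation that achieved $q$ simultaneous owners of its value within roughly the last $\Delta$ time units, apply Lemma~\ref{lem:churn} to bound the fraction of those owners lost to churn, and observe that any overwriting of a surviving node by a later propagation replaces its up-to-date value with another up-to-date value (your tag-monotonicity argument makes this last step slightly more explicit than the paper does). The only caveat is the timing bookkeeping --- the gap between $t$ and the chosen synchronization instant can in principle exceed $\Delta$ by up to $\delta$ --- but the paper's proof engages in essentially the same slack when placing the start of $\rho_0$, so this is not a divergence from its level of rigor.
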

\begin{proof} 
  With no loss of generality, let $\rho_1, ..., \rho_k$ be all the ongoing propagations 
  at time $t$ and let $\rho_0$ be the latest successful propagation
  that is already finished at time $t$.   
  By definition, all $v(\rho_i)$ for any $i\geq 0$ are the up-to-date values in the system. 
Propagations $\rho_1, ..., \rho_k$ must all have started after time $t-\delta$. 
By the periodicity assumption of propagation phase,
propagation $\rho_0$ can not start earlier than time 
$t-\Delta+\delta$. 
Due to propagation $\rho_0$, 
there must be $q$ nodes with value $v(\rho_0)$ between times
$t-\Delta+\delta$ and $t-\Delta+2\delta$.

Since the number of replaced nodes increases as time elapses, assume a worst case
scenario in which $q$ nodes own value $v(\rho_0)$ at time $t_1 = t-\Delta+\delta$, we show
that at least $q(1-c)^{\Delta}$ nodes with 
value $v(\rho_0)$ remain in the system at time $t_2 = t+\delta$. 
By Lemma~\ref{lem:churn}, we know that during period $t_2 - t_1 = \Delta$ exactly 
$\lfloor q(1-(1-c)^\Delta) \rfloor$ nodes with value $v(\rho_0)$ are replaced. 
Since propagations $\rho_1, ..., \rho_k$ are ongoing, there may be some
successful propagations among those ones that 
overwrite some node values.  
Observe that if this overwriting happens only to nodes that already own value $v(\rho_i)$, 
then the number of nodes with value $v(\rho_i)$ remains at least
$q(1-c)^\Delta$ at time $t+\delta$; if this overwriting happens
to nodes that do not own value $v(\rho_i)$ then this number increases.
That is, $q(1-c)^\Delta$ is a lower bound of the number
of nodes with value $v(\rho_i)$ at time $t+\delta$, which leads to the result.
\end{proof}

The following Fact gives this well-known bound on the exponential function, provable using the Euler's method.

\begin{fact}\label{fact}
$(1+\frac{x}{n})^n \leq e^{x}$ for $n>|x|$. 
\end{fact}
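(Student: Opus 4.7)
The plan is to reduce the inequality to the single-variable fundamental bound $e^y \ge 1 + y$, valid for all real $y$, and then lift it to the $n$-th power. Concretely, setting $y = x/n$ gives $e^{x/n} \ge 1 + x/n$. The hypothesis $n > |x|$ guarantees that $1 + x/n > 0$: if $x \ge 0$ this is trivial, and if $x < 0$ then $n > |x| = -x$ forces $x/n > -1$. Hence both sides are positive and raising to the $n$-th power preserves the direction of the inequality, yielding $e^x = (e^{x/n})^n \ge (1 + x/n)^n$, which is the claimed bound.

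It remains to justify the fundamental inequality $e^y \ge 1 + y$. I would define $g(y) = e^y - 1 - y$ and compute $g'(y) = e^y - 1$, which vanishes only at $y = 0$ and is positive (resp. negative) for $y > 0$ (resp. $y < 0$); thus $g$ attains its global minimum $g(0) = 0$, so $g(y) \ge 0$ everywhere. An equivalent route, closer to the ``Euler's method'' hint in the text, is to work with the sequence $a_n = (1 + x/n)^n$ for $n > |x|$ and show via the ratio $a_{n+1}/a_n$ (or by AM--GM applied to the factors $1, 1 + x/n, \ldots, 1 + x/n$) that $(a_n)$ is monotonically increasing with limit $\lim_{n \to \infty} a_n = e^x$, whence each $a_n \le e^x$.

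The only delicate point — and the one the hypothesis $n > |x|$ is there to handle — is the positivity of the base $1 + x/n$ when $x$ is negative. Without it, the $n$-th power would fail to be monotone in the base, and the step from $e^{x/n} \ge 1 + x/n$ to $e^x \ge (1 + x/n)^n$ would break down. Everything else is routine calculus, so I expect no substantive obstacle beyond this sign check.
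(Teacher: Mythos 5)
Your proof is correct. The paper itself offers no proof of this fact---it is stated as a well-known bound with only the remark that it is ``provable using the Euler's method''---so there is nothing to match against; your argument simply supplies what the paper omits. The main route you take (apply $e^y \geq 1+y$ with $y = x/n$ and raise to the $n$-th power, using $n > |x|$ to guarantee the base $1+x/n$ is positive so that the power map is monotone) is the standard one and is complete; your correct observation that the hypothesis $n>|x|$ is exactly what makes the final step legitimate is the only non-routine point, and your alternative sketch via the increasing sequence $(1+x/n)^n \to e^x$ is in fact the argument closest to the ``Euler's method'' the paper alludes to.
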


Next Lemma lower bounds the probability that any consultation consults an up-to-date value $v$. 
Recall that sometime it might happen that a value $v'$ is unsuccessfully propagated.
This may happen when a write operation fails in consulting the largest tag just before 
propagating value $v'$. 
Observe that in any case, a successful consultation returns only successfully propagated values.

\begin{lemma}
\label{lem:proba2}
If the number of nodes that own an up-to-date value is at least $q(1-c)^{\Delta}$ during 
the whole period of execution of consultation $\phi$, then consultation $\phi$ succeeds with 
high probability ($\geq 1 - e^{-\beta^2}$, with $\beta$ a constant).
\end{lemma}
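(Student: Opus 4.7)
The plan is to model each of the $q$ contacts made during consultation $\phi$ as a uniform random draw from the active node population, and then to show that the probability that all $q$ draws miss the set of nodes owning an up-to-date value is at most $e^{-\beta^2}$. Under the proof's standing assumption that the underlying Cyclon-based membership layer yields neighbors uniformly drawn from the active set, and that the dissemination tree of depth $\ell$, degree $k+1$ explores $\geq q$ nodes (cf. Line~\ref{line:dissemiate-size}), each distinct participant contacted during the consultation can be treated as a uniform sample. A consultation succeeds whenever at least one of its $q$ participants owns an up-to-date value, because the return value is the one associated with the maximum tag among all collected responses, and an up-to-date tag dominates any stale tag at the client's comparison step (Line~\ref{line:comparison2}).

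Next I would estimate the failure probability. Given the hypothesis that at least $q(1-c)^{\Delta}$ nodes in the system own an up-to-date value throughout $\phi$, the probability that a single random contact misses this good set is at most $1 - \frac{q(1-c)^{\Delta}}{n}$. Treating the $q$ contacts as (at worst) independent with-replacement samples — which only weakens the bound since sampling without replacement is less likely to miss — the probability that every contact misses is at most
\begin{equation}
\left(1 - \frac{q(1-c)^{\Delta}}{n}\right)^q. \notag
\end{equation}
Applying Fact~\ref{fact} with $m = q$ and $x = -q^2(1-c)^{\Delta}/n$ bounds this quantity by $\exp\!\bigl(-q^2(1-c)^{\Delta}/n\bigr)$. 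Substituting the chosen quorum size $q = \beta\sqrt{n}/(1-c)^{\Delta/2}$ from Line~\ref{line:qsize} makes the exponent collapse to $-\beta^2$, so the failure probability is at most $e^{-\beta^2}$, giving success probability $\geq 1 - e^{-\beta^2}$ as required.

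The main obstacle is justifying the independence (or at least negative correlation) of the $q$ contacts. Strictly speaking, the dissemination proceeds along a tree built from successive neighbor views, and participants are marked to avoid being counted twice, so the samples are neither independent nor with-replacement. I would handle this by invoking the underlying-protocol assumption that views are fresh uniform samples from the active population at each hop (this is exactly what the paper claims about Cyclon in the paragraph before the lemma), which lets each newly marked participant be treated as a uniform draw conditioned on the previously marked set; and then remarking that sampling without replacement only decreases the miss probability relative to the with-replacement bound above. A secondary, minor point to address is that the consultation's success really requires the maximum-tag response to be up-to-date, but since up-to-date values carry the largest tags by definition, hitting any single good node suffices — so the event "at least one good contact" is the right event to bound.
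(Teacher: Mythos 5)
Your proposal is correct and follows essentially the same route as the paper's own proof: bound the single-draw miss probability by $1-\frac{q(1-c)^{\Delta}}{n}$, dominate the without-replacement sampling by with-replacement sampling, raise to the power $q$, apply the exponential bound of Fact~\ref{fact}, and substitute the quorum size from Line~\ref{line:qsize} so the exponent collapses to $-\beta^2$. Your additional remarks on justifying the uniformity of contacts via the Cyclon assumption and on why hitting a single up-to-date node suffices are more explicit than the paper's treatment, but they do not change the argument.
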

\begin{proof}
The consultation of Algorithm~\ref{alg:gossip} draws uniformly at 
random $q$ nodes, without replacement.
To lower bound the probability ${\cal P}$ that any consultation consults an up-to-date value $v$,
we compute the probability that this value is obtained after $q$ drawings with 
replacement.  It is clear that the probability of obtaining a specific
node after $q$ drawings is larger without replacement than with replacement.  
  The probability for a node $x$ uniformly chosen at random 
  not to own the value $v$ is
$\Pr[x\notin {\cal Q}] = 1-\frac{q(1-c)^\Delta}{n}$
that is, the probability not to consult value $v$ after $q$ drawings, with 
replacement, is
$\Pr[x_1\notin {\cal Q}, ..., x_{q}\notin {\cal Q}] = \left(1-\frac{q(1-c)^\Delta}{n}\right)^{q}$.
By Fact~\ref{fact}, $\Pr[x_1\notin {\cal Q}, ..., x_{q}\notin {\cal Q}] \leq e^{-\frac{q^2}{n}(1-c)^\Delta}.$
By replacing the $q$ by the quorum size given at Line~\ref{line:qsize} 
of Algorithm~\ref{alg:gossip} in 
the contrapositive ${\cal P} \geq 1 - e^{-\frac{q^2}{n}(1-c)^\Delta}$ we obtain
the result ${\cal P} \geq 1 - e^{-\beta^2}$.
\end{proof}

This corollary simply concludes the two previous Lemmas stating that any consultation 
executed in the system succeeds by returning an up-to-date value.

\begin{corollary}\label{cor:prob-cons}
Any consultation $\phi$ succeeds with high probability ($\geq 1 - e^{-\beta^2}$, with $\beta$ a constant).
\end{corollary}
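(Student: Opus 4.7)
The plan is to obtain the corollary as an essentially immediate composition of the two preceding results, so the proof will be short: the real content has already been established. My strategy is to verify that the hypothesis of Lemma~\ref{lem:proba2} holds throughout the lifetime of an arbitrary consultation $\phi$, then invoke Lemma~\ref{lem:proba2} to read off the quantitative bound.

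First I would fix an arbitrary consultation phase $\phi$, say running over the time interval $[t_\phi, t_\phi+\delta']$ (where $\delta' \le \delta$ is its duration). The only hypothesis required by Lemma~\ref{lem:proba2} is that \emph{during the whole period of execution of $\phi$}, at least $q(1-c)^\Delta$ nodes own an up-to-date value. But Lemma~\ref{lem:withprop} supplies exactly this bound \emph{at any time $t$} of the execution; in particular it holds for every $t \in [t_\phi, t_\phi+\delta']$. So the hypothesis of Lemma~\ref{lem:proba2} is discharged without any additional work, provided I carefully remark that both lemmas are stated under the same running assumptions (periodic propagations every $\Delta$ time units, phases of duration at most $\delta$, and the uniform-neighbors assumption on the membership layer).

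Then I would apply Lemma~\ref{lem:proba2} to $\phi$, which yields directly that $\phi$ returns an up-to-date value with probability at least $1 - e^{-\beta^2}$, where $\beta$ is the constant appearing in the definition $q = \beta\sqrt{n}/(1-c)^{\Delta/2}$ at Line~\ref{line:qsize}. Since $\phi$ was arbitrary, the bound holds for any consultation.

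The main obstacle is not mathematical but presentational: I must make sure the ``throughout the period'' quantification matches between the two lemmas. Lemma~\ref{lem:withprop} is pointwise in $t$, and Lemma~\ref{lem:proba2} needs the bound to persist for the duration of $\phi$; since the pointwise bound holds uniformly, no union bound or extra churn argument is needed. I would end with a one-line remark that this gives the ``high probability'' conclusion advertised in the corollary statement, as $\beta$ can be tuned independently of $n$ and $\Delta$ to drive $e^{-\beta^2}$ arbitrarily close to $0$.
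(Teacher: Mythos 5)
Your proof is correct and takes essentially the same route as the paper, which simply combines Lemma~\ref{lem:withprop} (the pointwise lower bound on nodes holding an up-to-date value) with Lemma~\ref{lem:proba2} to conclude. Your additional care about matching the ``at any time $t$'' quantification to the ``during the whole period of execution'' hypothesis is a reasonable elaboration of the same one-line argument.
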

\begin{proof}
  The result is straightforward from Lemma~\ref{lem:withprop} and Lemma~\ref{lem:proba2}.
\end{proof}

Last but not least, the two theorems conclude the proof by showing that Algorithm~\ref{alg:gossip} implements a $\Delta$-TQS and
verifies probabilistic atomicity.

\setcounter{theorem}{0}
\begin{theorem}
Algorithm~\ref{alg:gossip} implements a $\Delta$-Timed Quorum System, where $\Delta$ is the maximum time between 
two subsequent propagation starts.
\end{theorem}
\begin{proof}
First observe that the set of quorums is the set of subsets of $q$ active nodes over the system at time $t$.
The timed access strategy at time $t$ over the set of all quorums is the uniform access strategy over all quorums since each 
node is chosen with a uniform access strategy among the active nodes at time $t$.
By Corollary~\ref{cor:prob-cons}, it is clear that the intersection between two quorums is ensured with high probability
as long as one quorum starts being contacted $\Delta$ timed before the other ends being contacted.
\end{proof}

\begin{theorem}
Algorithm~\ref{alg:gossip} implements a probabilistic atomic object. 
\end{theorem}
\begin{proof}
The proof shows that it exists an ordering $\prec$ defined by the tags verifying Definition~\ref{def:patomicity}. 
This ordering is such that
$\pi_i \prec \pi_j$ is equivalent to either $\ms{tag}(\pi_i)=\ms{tag}(\pi_j)$ and 
$\pi_i$ is a write and $\pi_j$ is a read, or $\ms{tag}(\pi_i)<\ms{tag}(\pi_j)$.
Each property of Definition~\ref{def:patomicity} is proved separately.
\begin{enumerate}
\item Property 1 is deduced straightforwardly from the other Properties.
\item 
The proof is done in two parts.  First, we show that Property 2 holds if consultation phase 
of operation $\pi_2$ obtains an up-to-date value. Second, we show that this consultation 
phase obtains an up-to-date value with high probability. 
\begin{enumerate}
\item On the one hand, we denote by $\phi_i$ and by $\rho_i$ the respective consultation phase and propagation phase 
of any operation $\pi_i$. 
We show by contradiction that Property 1 holds if $\phi_2$ consults an up-to-date value.
By absurd, assume that it is false.  That is, assume that $\phi_2$ consults
an up-to-date value, the response of $\pi_1$ precedes the invocation
of $\pi_2$, and $\pi_2 \prec \pi_1$.
Since $\phi_2$ consults an up-to-date value, we have $\ms{tag}(\phi_2) \geq \ms{tag}(\pi_1)$.
Now there are two cases to consider: either $\pi_2$ is a read or a write.
First, if $\pi_2$ is a write then $\ms{tag}(\pi_2) > \ms{tag}(\phi_2) \geq \ms{tag}(\pi_1)$ 
by examination of the code of Algorithm~\ref{alg:gossip} (cf. Lines~\ref{line:dissem-tag-w}).
By definition of $\prec$, if $\ms{tag}(\pi_2) > \ms{tag}(\pi_1)$ and $\pi_2$ is a write, then
it can not happen that $\pi_2 \prec \pi_1$.
Second, if $\pi_2$ is a read then $\ms{tag}(\pi_2) = \ms{tag}(\phi_2) \geq \ms{tag}(\pi_1)$ 
by examination of the code of Algorithm~\ref{alg:gossip} (cf. Lines~\ref{line:dissem-tag-r}).
By definition of $\prec$, if $\ms{tag}(\pi_2) \geq \ms{tag}(\pi_1)$ and $\pi_2$ is a read, then
it can not happen that $\pi_2 \prec \pi_1$.
As a result, this contradicts the assumption, showing that Property 1 holds if $\phi_2$
obtains an up-to-date value.

\item On the other hand, Corollary~\ref{cor:prob-cons} shows 
that any consultation obtains the most up-to-date value 
with high probability. Since Property 2 holds if a consultation of $\pi_2$ consults
an up-to-date value, and since any consultation consults an up-to-date value with high probability, 
the result follows.
\end{enumerate}
\item Property 3 follows simply from the way tags are chosen.  Let $\pi_1$ and $\pi_2$ be any 
two operations.   On the one hand, if 
$\pi_1$ and $\pi_2$ are initiated at node $i$, then they have distinct tag counters.
On the other hand, if $\pi_1$ and $\pi_2$ are
initiated at two distinct nodes, then they have distinct tag identifiers
$i$ and $j$. As a result, two operations have different tags and 
either $\ms{tag}(\rho_1) > \ms{tag}(\rho_2)$ or 
$\ms{tag}(\rho_1) < \ms{tag}(\rho_2)$ holds.

\item Property 4 fails only if the read operation is unsuccessful.
The probability $P_\pi$ for an operation $\pi$ to be unsuccessful is lower than the 
probability $P_\phi$ that its consultation $\phi$ is unsuccessful. Since we know by Corollary~\ref{cor:prob-cons} 
that this later probability $P_\phi$ is very low ($P_\phi = e^{-\beta^2}$), the probability $P_\pi$ that an operation is 
unsuccessful is very low too ($P_\pi < e^{-\beta^2}$).
It follows that
Property 4 holds with high probability ($\geq 1-e^{-\beta^2}$).
\end{enumerate}
\end{proof}

\end{document}